\documentclass[letterpaper, 10 pt, conference]{IEEEconf}  
\usepackage[utf8]{inputenc}

\IEEEoverridecommandlockouts    
\usepackage{cite}
\usepackage{amsmath,amssymb,amsfonts}
\usepackage{graphicx}
\usepackage{bm}
\usepackage{multirow}

\usepackage{algorithm}
\usepackage{algorithmicx}
\usepackage{algpseudocode}

\usepackage{booktabs}

\usepackage{textcomp}
\usepackage{amssymb}
\usepackage{amsmath}
\usepackage{graphicx}
\usepackage{xcolor}
\usepackage{makecell}
\usepackage{comment}
\usepackage{todonotes}

\usepackage{cite}
\usepackage{graphicx}
\usepackage{subcaption}

\usepackage{xcolor}
\definecolor{light-blue}{rgb}{0.3,0.5,0.8}
\usepackage[colorlinks=true, linkcolor=cyan, citecolor=cyan, filecolor=magenta, urlcolor=cyan]{hyperref}
\usepackage{cleveref}

\title{\bf Integrated Routing and Intersection Control for Mixed Traffic}

\author{Filippos N. Tzortzoglou$^\dagger$,~\IEEEmembership{Student Member,~IEEE,}
Pengbo Zhu$^\dagger$,~\IEEEmembership{Member,~IEEE,}\\
and Andreas A. Malikopoulos,~\IEEEmembership{Senior Member,~IEEE}
\thanks{This research was supported in part by NSF under Grants CNS-2401007, CMMI-2348381, IIS-2415478, in part by MathWorks, and in part by the Swiss National Science Foundation (SNSF) through project P500-2\_235379.}
\thanks{Filippos N. Tzortzoglou and Pengbo Zhu are with the Civil and Environmental Engineering Department, Cornell University, Ithaca, NY, USA.}
\thanks{Andreas A. Malikopoulos is with the Applied Mathematics, Systems Engineering, Mechanical Engineering, Electrical \& Computer Engineering, and School of Civil \& Environmental Engineering, Cornell University, Ithaca, NY, USA (emails: \texttt{ft253,pz283,amaliko@cornell.edu}).}
}

\usepackage{amsthm}

\newtheorem{theorem}{Theorem}

\newtheorem{remark}{Remark}

\begin{document}

  \maketitle

 \thispagestyle{empty}
 \pagestyle{empty}
 
\begin{abstract}
The rapid development of cyber-physical systems is driving a transition toward mixed traffic environments comprising both human-driven and connected and automated vehicles (CAVs). This shift presents a unique opportunity to leverage the efficient operation of CAVs to improve overall network throughput. This paper introduces a hierarchical framework designed to bridge macroscopic routing optimization at the network level with microscopic vicinity control at signalized intersections. The upper layer utilizes aggregated traffic information to provide proactive routing guidance for CAVs, aiming to minimize total travel time. The lower layer leverages local vehicle states to jointly optimize traffic light phases and individual CAV trajectories, aiming to reduce intersection crossing delays and optimize energy consumption, respectively. The effectiveness of the proposed framework is validated through SUMO on the Sioux Falls benchmark network. Results demonstrate that the integration of these macroscopic and microscopic layers yields significantly better performance compared to applying either layer in isolation, significantly improving network throughput and reducing congestion.

\end{abstract}
\begingroup
\renewcommand{\thefootnote}{$\dagger$}
\footnotetext{\textbf{The authors contributed equally to this work.}}
\endgroup
\section{Introduction}
Recently, society has witnessed an increasing adoption of automation and connectivity in mobility systems. Connected and automated vehicles (CAVs) have emerged as a promising solution for improving safety, efficiency, and comfort \cite{guanetti2018control,tzortzoglou2026toward}. A growing body of literature highlights the potential benefits of CAVs from different perspectives, including vehicle routing \cite{bang2023optimal,wollenstein2021routing, Salazar2019congestion}, coordination at traffic junctions \cite{Malikopoulos2020,tzortzoglou2024feasibility,tan2026real,xu2022general,naderi2025lane}, and mobility equity \cite{bang2024emergingequity, salazar2024accessibility}. However, the transition toward full CAV adoption is not straightforward. Notably, studies suggest that a 100\% CAV penetration rate is unlikely to be achieved before 2060 \cite{alessandrini2015automated}. Consequently, there will be a transitional phase in which CAVs operate alongside human-driven vehicles (HDVs) on public roads. 

Vehicle routing in transportation networks constitutes a fundamental problem in CAV research \cite{rossi2018routing}. The general goal of routing problems is to assign efficient routes to vehicles based on their origins, destinations, and prevailing traffic conditions to reduce congestion. This problem has been extensively studied from multiple perspectives, including scenarios with full CAV penetration \cite{chu2017dynamic}, joint routing and trajectory planning \cite{Bang2022combined, bang2023optimal}, mixed-traffic environments \cite{Salazar2019congestion}, vehicle rebalancing strategies \cite{wollenstein2021routing, Zhu2024Coverage}, charging scheduling for electric vehicles\cite{liang2020mobility}, and mobility equity ~\cite{bang2024emergingequity}. 

Beyond routing, there has been growing interest over the past decade in the joint optimization of CAV trajectories and traffic signals in mixed-traffic environments at signalized intersections. The objective in such problems is to determine signal timings and CAV trajectories in the vicinity of intersections, based on the state of approaching vehicles while accounting for the unpredictable behavior of HDVs, to improve throughput and/or reduce energy consumption. Early studies on autonomous intersection management have demonstrated that such frameworks can significantly enhance intersection capacity \cite{dresner2007sharing}. Recently, several approaches have been proposed to address this problem, including bi-level optimization schemes in which CAV trajectories are optimized in response to traffic signals \cite{kamal2019development,tzortzoglou2025safe}, reinforcement learning methods \cite{guo2023cotv}, and fully integrated formulations where both signal phases and CAV trajectories are optimized simultaneously~\cite{tajalli2021traffic,suriyarachchi2023optimization,le2024distributed}. In our previous work \cite{tzortzoglou2025safe}, we proposed a bi-level framework that optimizes CAV trajectories and signal timings under mixed traffic conditions. A pressure-based signal control policy dynamically adjusts phase durations based on lane-level vehicle densities. In turn, CAVs plan their trajectories via optimal control to either cross the intersection or execute an energy-optimal stop. 

Although significant research exists on vehicle routing and the joint optimization of traffic signals and CAV trajectories, to the best of our knowledge, no prior work has merged these problems to explore the benefits of signal–CAV coordination at the network scale. The studies closest to this work are \cite{suriyarachchi2023optimization,liu2025integrated,niroumand2025real}. Specifically, in \cite{suriyarachchi2023optimization}, the authors consider the joint control of traffic signals and CAV trajectories in a multi-intersection setting; however, their evaluated network consists of only five intersections, and routing decisions are assumed to be exogenous. In \cite{liu2025integrated}, the authors propose a joint optimization framework for traffic signals, CAVs, and routing, but they assume a 100\% CAV penetration rate. Finally, in \cite{niroumand2025real}, the authors address the joint optimization problem for traffic signals and CAVs under mixed traffic conditions, but they restrict their focus to adjacent intersections without network-level routing.

This paper addresses the aforementioned gap by proposing a hierarchical framework that integrates system-level routing with the joint optimization of CAV trajectories and traffic signals at the intersection level. The upper layer determines real-time route choices for CAVs to minimize the total travel time, while the lower layer solves a joint optimization problem involving CAV trajectories and traffic signal timings in the vicinity of signalized intersections. The joint control of traffic signals and CAV trajectories is performed in a decentralized manner for each intersection, without assuming a global centralized planner that governs microscopic actions across the entire network. Furthermore, the two layers are coupled such that macroscopic routing decisions account for the estimated link travel times produced by the lower layer.

The remainder of this paper is organized as follows. In Section II, we propose the hierarchical framework. In Section III, we describe the lower-layer joint optimization framework that connects CAV trajectory control with traffic signal timing. In Section IV, we present the upper-layer routing problem and discuss the communication between the two layers. In Section V, we present the simulation results using SUMO on the Sioux Falls network. Finally, in Section VI, we draw concluding remarks and discuss future work.

\section{Hierarchical Framework}

In this section, we present the proposed hierarchical framework, detailing how we decompose the complex problem of jointly optimizing CAV routing, traffic signal timings, and microscopic CAV trajectories across a transportation network under mixed traffic conditions.

For multi-agent systems, a centralized controller may suffer from scalability issues and heavy computational burdens that make it impossible to implement unified optimization algorithms in real-time \cite{Lakshmikantham1981LargescaleDS}. To overcome this limitation, a well-received solution is to use a hierarchical framework and design a multilayer control scheme  \cite{YILDIRIMOGLU2018hierarchical}. In this setup, the upper-layer controller operates at a larger time step and interacts with a lower-layer controller which acts on local information to optimize local dynamics at a faster frequency. 

\begin{figure}[htpb]
    \centering
    \includegraphics[width=1\linewidth]{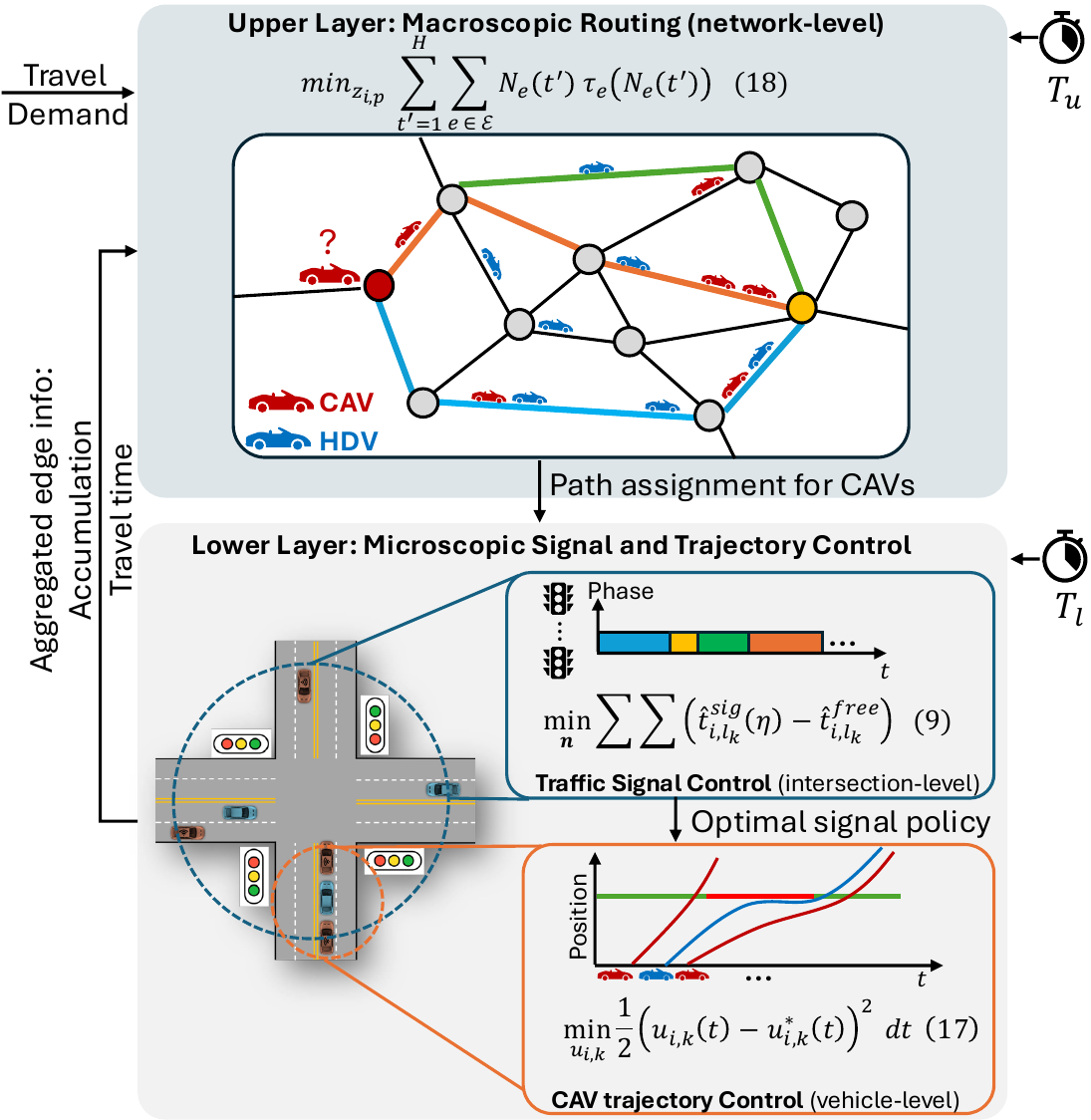}
    \caption{
Hierarchical control framework for CAVs in mixed traffic. For each CAV with a given origin and destination, the upper layer determines the route, with candidate routes shown in different colors. At each signalized intersection, the phase durations are displayed. The lower layer then designs the CAV trajectories (red curves).}
    \label{fig:hierframework}
    \vspace{-12pt}
\end{figure}

In this paper, we propose a hierarchical structure to alleviate congestion in the network under given demand patterns, as illustrated in Fig. \ref{fig:hierframework}. At the upper layer, the controller leverages aggregated information from local measurements, e.g., the number of vehicles and estimated travel times on each road segment. Its objective is to minimize the total travel time on each link of the network by providing route guidance for CAVs. On the other hand, the lower level focuses on the control of CAV trajectories at the vicinity of each signalized intersection, while optimizing the traffic signals durations.
\begin{figure*}
    \centering
    \includegraphics[width=0.68\linewidth]{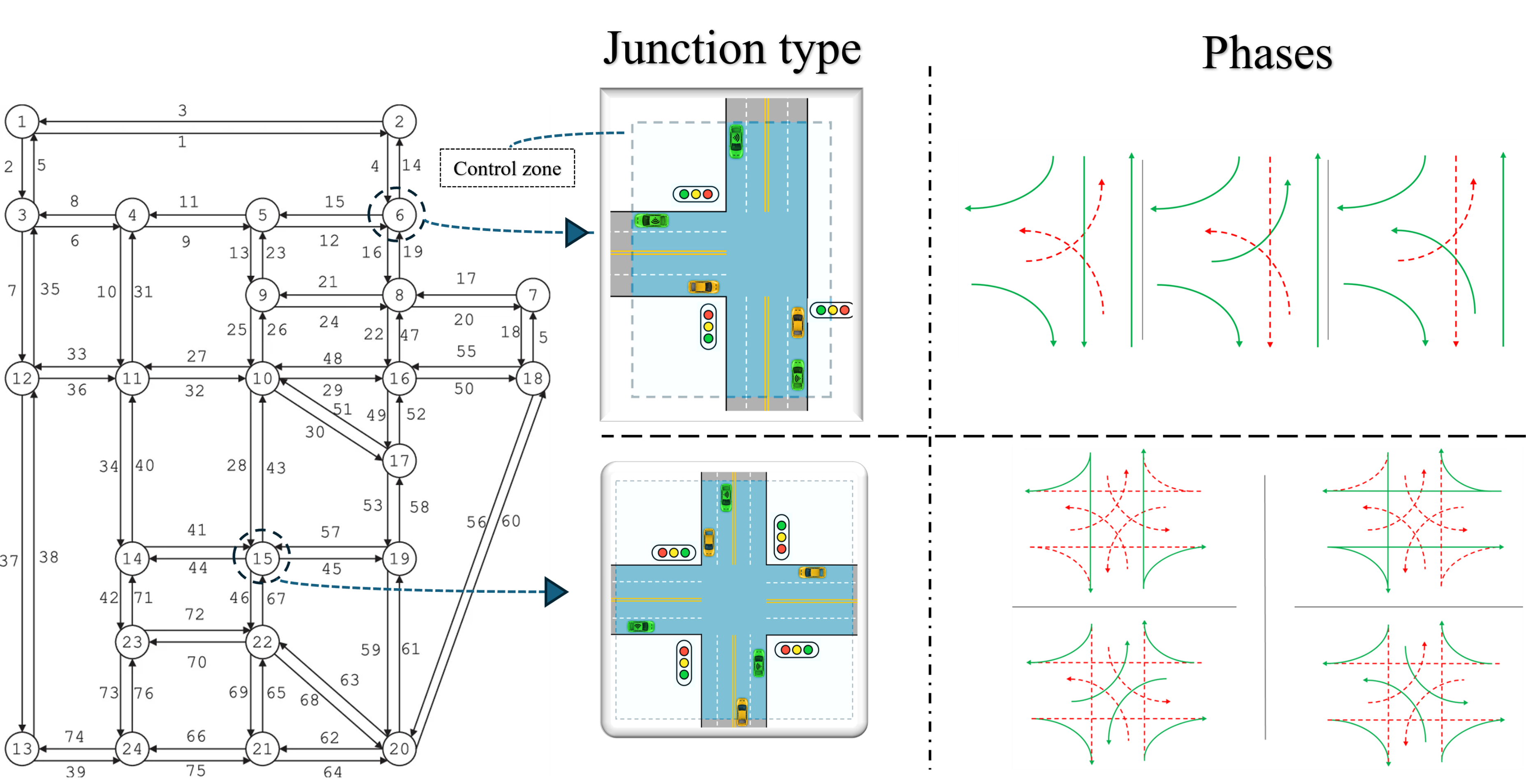}
    \caption{Sioux Falls Network: Junction Types and Signal Phases}
    \label{fig:network}
    \vspace{-13pt}
\end{figure*}

\section{Lower-level problem: Joint Optimization for CAVs and Traffic Signals}
\label{section_joint_cav_signals}
\subsection{Intersection Modeling}
Although our proposed framework can be easily generalized, we evaluate its performance using the benchmark Sioux Falls network \cite{ TransportationNetworksRepo}, as illustrated in Fig.~\ref{fig:network}. The original network, which consists of 24 nodes and 76 directed edges, has been explicitly adapted in this work to include microscopic signalized intersections.  We denote the network as
$\mathcal{G}=(\mathcal{K},\mathcal{E})$, where $\mathcal{K}=\{1,2,\dots, K\}$
is the set of nodes and $\mathcal{E}$ is the set of directed edges connecting
them. Each intersection $k$ has a control zone of range $R_k$, within which a coordinator (in our case, the smart traffic light infrastructure) can exchange information with CAVs and obtain the state of all vehicles. The state of HDVs can be obtained via roadside units. We define the set of incoming links of intersection $k$ as $\mathcal{E}_k \subseteq \mathcal{E}$, and denote by $\mathcal{C}_k$ the set of CAVs operating on these links within the control zone range $R_k$. We assume each link has two lanes and is bidirectional; for every edge $(o,d) \in \mathcal{E}$, the reverse edge $(d,o) \in \mathcal{E}$ also exists. Each node’s geometry is defined by its incoming and outgoing links. This network has 4 types of
nodes: 2-leg, 3-leg, 4-leg, and a 5-leg junction (see Fig.~\ref{fig:network} for illustration).

\subsection{Traffic Signal Control}
\label{sec:signal_control}

\subsubsection{Signal Structure and Policy}
\label{sec:signal_setup}

Each intersection $k\in\mathcal{K}$ with more than two
incoming--outgoing links is controlled by a traffic signal with
$M$~phases. The set of phases of intersection $k$ is denoted as $\mathcal{P}_k=\{1,\dots,M\}$.  During phase $m\in\mathcal{P}_k$, the links with right-of-way are denoted as $\mathcal{E}_{k,m}\subseteq\mathcal{E}_k$ (see Fig.~\ref{fig:network} for all phase configurations).

The phases repeat in a fixed order $\{1,2,\dots,M,1,2,\dots\}$. If no vehicle is detected on the links associated with a phase, that phase is allowed to have zero duration.
The optimization horizon covers two cycles,
$\mathcal{H}=\{1,\dots,2M\}$ and the phase active at step~$h$
is denoted as $\sigma(h)=((h{-}1)\!\mod M)+1$. For each step
$h\in\mathcal{H}$, let $\eta_h>0$ be the green duration of the associated phase $\sigma(h)$.  Every green period is followed by a clearance interval of fixed
duration~$h_\mathrm{c}$.  The green start
and end times are computed recursively:
\begin{align}
  \tau_1 &= 0,
  \qquad
  \theta_1 = \eta_1,
  \label{eq:timing_init}  \\
  \tau_h &= \theta_{h-1}+h_\mathrm{c},
  \qquad
  \theta_h = \tau_h + \eta_h,
  \quad h=2,\dots,2M.
  \label{eq:timing_recursion}
\end{align}
The signal policy of intersection~$k$ at time~$t$ is
\begin{equation}
  \bm{\phi}_k(t)
  =
  \bigl[\,
    \sigma_0(t),\;\eta_0(t);\;
    \bm{\sigma}^\top,\;\bm{\eta}^\top
  \,\bigr],
  \label{eq:signal_policy}
\end{equation}
where $\sigma_0(t)$ is the currently active phase, $\eta_0(t)$ its
remaining duration,
$\bm{\sigma}=[\sigma(1),\dots,\sigma(2M)]^\top$, and
$\bm{\eta}=[\eta_1,\dots,\eta_{2M}]^\top$.  The network-wide
policy is denoted as
$\bm{\Phi}(t)=\{\bm{\phi}_k(t)\}_{k\in\mathcal{K}}$.

\subsubsection{Estimation of Vehicle Crossing Times}
\label{sec:crossing_time_estimation}

Let $\mathcal{C}_{l_k}\subseteq\mathcal{C}_k$ denote the set of vehicles on link $l_k\in\mathcal{E}_k$ within the control zone of intersection $k$, indexed by $i=\{1,\dots,|\mathcal{C}_{l_k}|\}$, such that $i=1$ represents the vehicle closest to the intersection stop line. We denote by $V_c$ the reference crossing speed (i.e., the free-flow speed), by $h_f$ the saturation headway, and by $\pi(l_k)\in\mathcal{P}_k$ the signal phase associated with link $l_k$.

We next develop a metric to quantify signal-induced delay, defined as the difference between vehicle crossing times under a given signal control policy and the corresponding crossing times under uninterrupted (all-green) conditions. We begin by defining the free-flow crossing time.

\paragraph{Free-flow crossing time}
Considering all signals are green, the earliest crossing time of
vehicle~$i$ on link~$l_k$ is bounded by its free flow travel time and the
discharge headway of the preceding vehicle, which is estimated as:
\begin{equation}
  \hat{t}_{i,l_k}^{\,\mathrm{free}}
  =
  \max\!\Bigl(
    \frac{p_{i,l_k}}{V_\mathrm{c}}
      + \varepsilon(v_{i,l_k}),\;
    \hat{t}_{i-1,l_k}^{\,\mathrm{free}}
      + h_\mathrm{f}
  \Bigr),
  \label{eq:free_flow_crossing}
\end{equation}
where $\varepsilon(v_{i,l_k})\geq 0$ is a start-up delay that is
positive when the vehicle is near standstill and zero otherwise. The first term in \eqref{eq:free_flow_crossing} represents the travel-time-based
arrival estimate, determined by the vehicle’s current position $p_{i,l_k}$ and the free-flow speed $V_c$.
The second term enforces the headway constraint, ensuring that the
vehicle cannot cross earlier than its predecessor plus the saturation headway $h_\mathrm{f}$. For the lead vehicle ($i=1$), the headway is omitted. Formula \eqref{eq:free_flow_crossing} is applied to all vehicles in the signal optimization. Although actual trajectories may deviate under free-flow conditions, we adopt \eqref{eq:free_flow_crossing} for its closed-form structure, ability to capture typical traffic behavior, and computational efficiency in real time. We next focus on signal-induced delay.

\paragraph{Signalized crossing time}
Given a candidate signal schedule
$(\sigma(h),\tau_h,\theta_h)_{h\in\mathcal{H}}$, the signalized
crossing times are computed sequentially for all vehicles
$i=1,\dots,|\mathcal{C}_{l_k}|$, $\forall l_k \in \mathcal{E}_k$.

For each link $l_k$, we define the set of horizon steps during which its associated phase $\pi(l_k)$ is active as: $$\mathcal{H}_{l_k}=\{h\in\mathcal{H}:\sigma(h)=\pi(l_k)\}.
$$

To account for both signal timing and queue discharge, we define the earliest feasible crossing time of vehicle $i$ on link $l_k$, denoted by $r_{i,l_k}(\bm{\eta})$, as a function of $\bm{\eta}$:
\begin{equation}
  r_{i,l_k}(\bm{\eta})
  =
  \begin{cases}
    \hat{t}_{i,l_k}^{\,\mathrm{free}},
      & i=1,
    \\[4pt]
    \max\!\Bigl(
      \hat{t}_{i,l_k}^{\,\mathrm{free}},\;
      \hat{t}_{i-1,l_k}^{\,\mathrm{sig}}(\bm{\eta}) + h_\mathrm{f}
    \Bigr),
      & i\geq 2.
  \end{cases}
  \label{eq:reference_time_signalized}
\end{equation}

The estimated crossing time is then
\begin{equation}
  \hat{t}_{i,l_k}^{\,\mathrm{sig}}(\bm{\eta})
  =
  \begin{cases}
    r_{i,l_k}(\bm{\eta}),
      \quad \text{if }\exists\, h\in\mathcal{H}_{l_k}:\;
        \tau_h \le r_{i,l_k}(\bm{\eta}) \le \theta_h,
    \\[5pt]
    \displaystyle \tau^* ,
      \quad \quad \quad \text{if }\exists\, h\in\mathcal{H}_{l_k}:\;
        \tau_h > r_{i,l_k}(\bm{\eta}),
    \\[8pt]
    \theta_{2M}+\Delta(\pi(l_k)),
      \quad \text{otherwise},
  \end{cases}
  \label{eq:signalized_crossing_time}
\end{equation}
where
$
    \tau^*= \min\bigl\{\tau_h \,:\, h\in\mathcal{H}_{l_k},\;
    \tau_h > r_{i,l_k}(\bm{\eta})\bigr\},
$
and
$
  \Delta(m)
  =
  \bar{w}_{l_k}\,\bigl[\bigl(m-\sigma(2M)\bigr)\!\!\mod M\bigr],
  \label{eq:beyond_horizon_offset}
$
with $\bar{w}_{l_k}$ denoting the average green duration associated with link $l_k$, used to approximate the signal evolution beyond the horizon. Note that \eqref{eq:reference_time_signalized} differs from \eqref{eq:free_flow_crossing}, since the predecessor’s crossing time is determined by signalized operation rather than by its free-flow estimate.

Thus, in \eqref{eq:signalized_crossing_time}, the reference time $r_{i,l_k}(\bm{\eta})$ first accounts for both the vehicle’s free-flow arrival and the minimum headway behind its predecessor. If this reference time falls within a green interval of phase $\pi(l_k)$, the vehicle crosses immediately (branch 1). Otherwise, the vehicle is assigned to the start of the next available green interval of phase $\pi(l_k)$ within the optimization horizon (branch 2). If no such green interval exists within the horizon, the crossing time is approximated by projecting the phase sequence beyond the horizon using the average green duration $\bar{w}_{l_k}$ (branch 3).

Finally, the estimated delay for vehicle $i$ is defined as the difference between its actual signalized crossing time and its free-flow crossing time:
\begin{equation}
  \hat{d}_{i,l_k}(\bm{\eta})
  =
  \hat{t}_{i,l_k}^{\,\mathrm{sig}}(\bm{\eta})
  -
  \hat{t}_{i,l_k}^{\,\mathrm{free}}.
  \label{eq:estimated_delay}
\end{equation}

\subsection{Signal Optimization Problem}
\label{sec:signal_opt_problem}

Having established the delay estimation framework, we now formulate the signal optimization problem. Rather than optimizing the green durations directly, we parameterize the signal phase control policy through the discrete number of vehicles expected to discharge during each phase.

Specifically, for each link $l_k \in \mathcal{E}_k$ and horizon step $h \in \mathcal{H}$, let $n_{l_k,h} \in \mathbb{Z}_{\geq 0}$ denote the number of vehicles from link $l_k$ scheduled to discharge during step $h$. Since each step $h$ corresponds to a signal phase, the green duration is determined by the most heavily loaded served link. Hence, the green duration $\eta_h$ at step $h$ is given by
\begin{equation}
  \eta_h
  =
  \Bigl(\max_{l_k\in\mathcal{E}_{k,\sigma(h)}} n_{l_k,h}\Bigr)
  h_\mathrm{f}
  + \varepsilon_h,
  \label{eq:green_from_n}
\end{equation}
where $h_\mathrm{f}$ is the saturation headway, $\varepsilon_h$ denotes the corresponding lost time, and $\mathcal{E}_{k,\sigma(h)} \subseteq \mathcal{E}_k$ is the subset of links served by the active phase $\sigma(h)$.

Let $\bm{n}=(n_{l_k,h})_{l_k\in\mathcal{E}_k,h\in\mathcal{H}}$ collect all decision variables, and let $\bm{\eta}(\bm{n})$ denote the vector of green durations induced by \eqref{eq:green_from_n}. The objective is to minimize the total estimated signal-induced delay of all vehicles approaching the intersection over the optimization horizon. To avoid interruption with CAV trajectories, the first cycle, i.e., the first $M$ steps of the horizon, is kept fixed at previously committed values, and only the second cycle $h \in \{M+1,\dots,2M\}$ is optimized. The resulting optimization problem is formulated as
\begin{subequations}\label{eq:signal_opt}
\begin{align}
  \min_{\bm{n}} \;\;
  & \sum_{l_k\in\mathcal{E}_k}
    \sum_{i\in\mathcal{C}_{l_k}}
    \hat{d}_{i,l_k}\!\bigl(\bm{\eta}(\bm{n})\bigr)
  \label{eq:signal_obj}\\[3pt]
  \textrm{s.t.}\;\;
  & n_{l_k,h} = 0,
    \qquad
    \forall\, l_k \notin \mathcal{E}_{k,\sigma(h)},\;
    \forall\, h \in \mathcal{H},
  \label{eq:phase_compatibility}\\
  & 0 \leq n_{l_k,h} \leq n_{\max},
    \qquad
    \forall\, l_k \in \mathcal{E}_{k,\sigma(h)},\;
    \forall\, h \in \mathcal{H},
  \label{eq:capacity_bound}\\
  & n_{l_k,h} = \bar{n}_{l_k,h}^{\,\mathrm{fix}},
    \qquad
    \forall\, l_k \in \mathcal{E}_k,\;
    h=1,\dots,M,
  \label{eq:fixed_first_cycle}\\
  & n_{l_k,h} \in \mathbb{Z}_{\geq 0},
    \qquad
    \forall\, l_k \in \mathcal{E}_k,\;
    \forall\, h \in \mathcal{H}.
  \label{eq:integer_constraint}
\end{align}
\end{subequations}

Since phase durations are induced by $\bm{n}$ through \eqref{eq:green_from_n}, problem \eqref{eq:signal_opt} determines the green allocation by selecting how many vehicles are served in each phase. Constraint \eqref{eq:phase_compatibility} ensures vehicles are assigned only to links served by the active phase. Constraint \eqref{eq:capacity_bound} imposes a per-step upper bound on discharged vehicles, while \eqref{eq:fixed_first_cycle} preserves the committed plan over the first cycle. Finally, \eqref{eq:integer_constraint} enforces integer discharge decisions.
%

\begin{algorithm}[H]
\caption{Receding hor. signal control at phase change}
\label{alg:receding_horizon}
\begin{algorithmic}[1]
\State Measure $(p_{i,l_k}, v_{i,l_k})$ for all vehicles in $\mathcal{C}_k$
\State Fix $n_{l,h}=\bar{n}_{l,h}^{\mathrm{fix}}$, \ $\forall h=1,\dots,M$
\State Solve \eqref{eq:signal_opt} for $n_{l,h}$, \ $\forall h=M+1,\dots,2M$
\State Broadcast $(\sigma(h),\tau_h,\theta_h)_{h\in\mathcal{H}}$ to all vehicles
\State CAVs plan trajectories via low-level control
\State Set $\bar{n}_{l,h}^{\mathrm{fix}} \gets n_{l,h}$, $\forall h=1,\dots,M$
\end{algorithmic}
\end{algorithm}
\vspace{-0.5em}

This receding-horizon problem is solved at each step $h$ (that is at the beginning of each phase), and adapts to evolving traffic conditions while providing CAVs with a stable schedule over the next cycle, as shown in Algorithm~1.

\subsection{Control of CAVs}

Subsequently, given the resulting signal plan, we compute the control trajectories for CAVs. We model the motion of each vehicle as a double integrator. In particular, consider a vehicle $i$ approaching intersection $k \in \mathcal{K}$ traveling on link $l_k\in\mathcal{E}_k$. Then, its state is described by the dynamics
\begin{align}
    \dot{p}_{i,l_k}(t) &= v_{i,l_k}(t), \nonumber \\
    \dot{v}_{i,l_k}(t) &= u_{i,l_k}(t),
    \label{eq:vehicle_dynamics}
\end{align}
where $p_{i,l_k}(t)$ denotes the position of vehicle $i$ on link $l_k\in \mathcal{E}_k$ (measured from the crossing point), $v_{i,l_k}(t)$ is its speed, and $u_{i,l_k}(t)$ is the acceleration.
The state and control variables are subject to the constraints
\begin{align}
0 \le v_{i,l_k}(t) &\le v_{\max}, 
\qquad i \in \mathcal{C}_k,\; l_k \in \mathcal{E}_k,
\label{eq:speed_constraints} \\
u_{i,\min} \le u_{i,l_k}(t) &\le u_{i,\max}, 
\qquad i \in \mathcal{C}_k,\; l_k \in \mathcal{E}_k.
\label{eq:control_constraints}
\end{align}
To ensure collision-free motion, rear-end safety constraints must be satisfied. For a vehicle $i$ and its preceding vehicle $j$ on the same link, the inter-vehicle distance must satisfy
\begin{equation}\label{rear_end_constraint}
p_{j,l_k}(t) - p_{i,l_k}(t) \geq s_0 + v_{i,l_k}(t)T,
\end{equation}
where $s_0$ denotes the standstill safety distance and $T$ represents the safe time headway. Also, each CAV must respect the traffic signal constraint at the intersection, formulated as:
\begin{align}
    t_{i,{l_k}}^{\text{cr}} \in \mathcal{T}_{i,l_k}(\phi_k(t)),
\end{align}
where $t_{i,l_k}^{\text{cr}}$ denotes the time when CAV $i$ on link $l_k$ crosses the stop line of intersection $k$. The set $\mathcal{T}_{i,l_k}(\phi_k(t))$ defines the available green intervals under the signal policy $\phi_k(t)$.

Consider now a CAV $i$ entering the control zone of intersection $k$ at time $t_{i,l_k}^0$. Through V2X communication, the vehicle receives the set of admissible crossing times $\mathcal{T}_{i,l_k}(\phi_k(t))$, as well as the planned crossing times of the rest of CAVs and the predictions of the crossing times of HDVs approaching the same intersection $k$. Then, the goal of each CAV is to compute a time-optimal reference trajectory to cross the intersection subject to the aforementioned constraints. 
\begin{remark}
    In networks with $100\%$ CAV penetration, the crossing times of other vehicles can be treated as deterministic. In mixed traffic, however, the crossing times of HDVs must be estimated, and the actual ones may deviate from their predicted values.\end{remark}
    \begin{figure}
        \centering
        \includegraphics[width=1.04\linewidth]{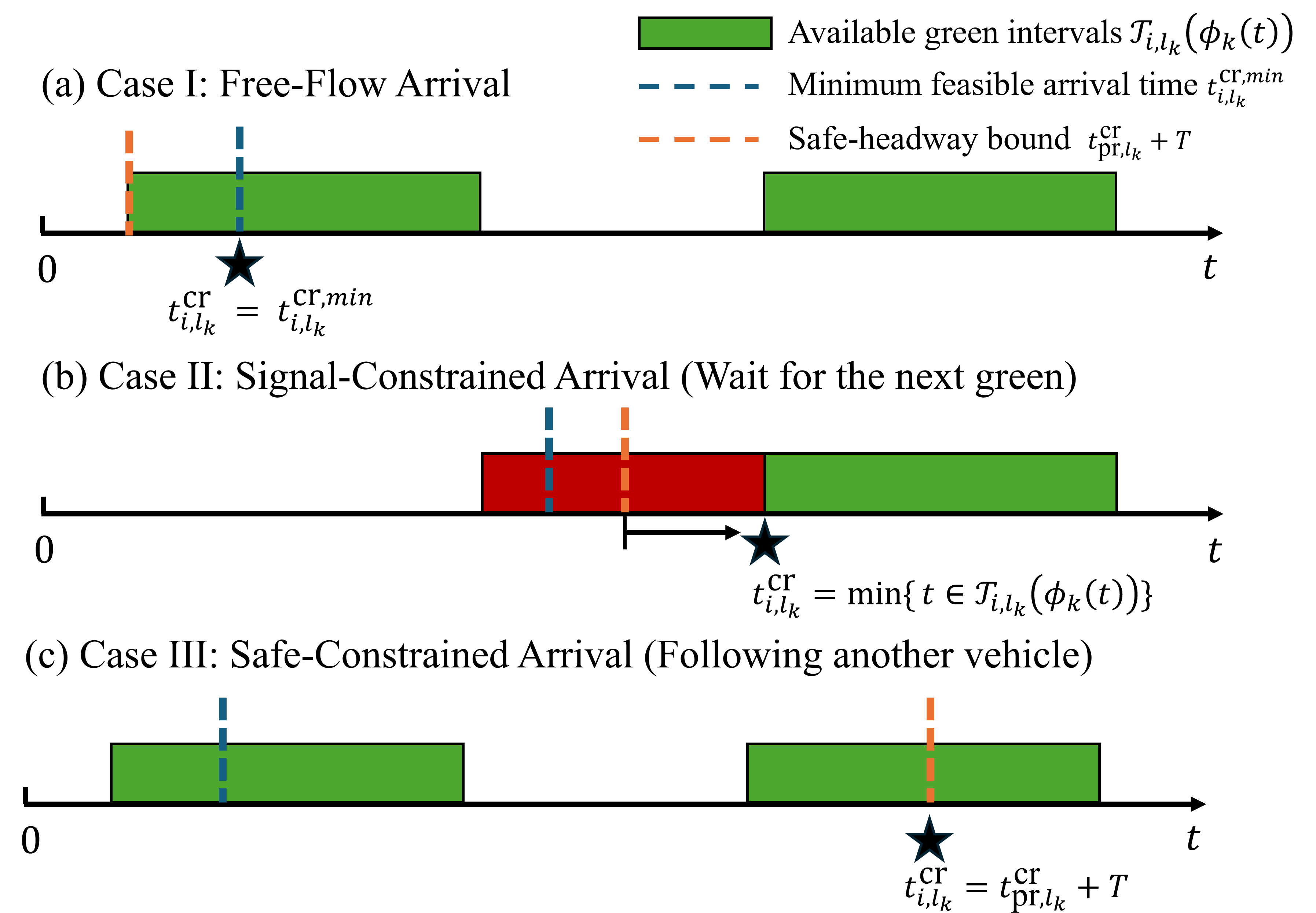}
        \caption{Illustration of the crossing time selection logic Eq.\ref{crossing_time_selection}. The selected crossing time $t_{i,l_k}^{\text{cr}}$($\bigstar$) is the earliest moment on or after the maximum of two bounds 1) the earliest arrival time $t_{i,l_k}^{{\text{cr}},\min}$ and 2) the safety-headway bound $t_{\text{pr},l_k}^{\text{cr}}+T$ that falls within an available green interval.}
        \label{fig:timeSelection}
        \vspace{-12pt}
    \end{figure}
Following \cite{tzortzoglou2025safe}, we employ the Intelligent Driver Model (IDM) to simulate the evolution of HDV trajectories forward in time to estimate their crossing times. The desired crossing time $t_{i,{l_k}}^{\text{cr}}$ of the CAV is determined by
\begin{align} \label{crossing_time_selection}
    t_{i,l_k}^{\text{cr}} = \min \Big\{ t \in \mathcal{T}_{i,l_k}(\phi_k(t)) \;\Big|\; t \geq \max\!\big(t_{i,l_k}^{{\text{cr}},\min},\, t_{\text{pr},l_k}^{\text{cr}}+T\big) \Big\},
\end{align}
where $t_{i,l_k}^{{\text{cr}},\min}$ denotes the minimum feasible arrival time of CAV $i$ at the stop line, subject to its speed and acceleration limits \eqref{eq:speed_constraints} and \eqref{eq:control_constraints}. The variable $t_{\text{pr},l_k}^{\text{cr}}$ denotes the crossing time of the preceding vehicle (which is estimated if the preceding vehicle is an HDV), and $T$ denotes the designated time headway. For reference, see Fig. \ref{fig:timeSelection}.

Given the desired crossing time $t_{i,l_k}^{\text{cr}}$, CAV $i$ computes its trajectory by solving the following optimal control problem:\\
\noindent \textbf{Energy-optimal control problem:} \label{prb:ocp-1}
\begin{equation}
\begin{aligned}
\label{eq:energy_cost}
&\underset{u_{i,l_k}}{\min} \quad \frac{1}{2} \int_{t^{0}_{i,l_k}}^{t_{i,l_k}^{\text{cr}}} u_{i,l_k}^2(t) \, \mathrm{d}t, \\
&\text{subject to:} \quad 
\eqref{eq:vehicle_dynamics}, \quad \eqref{eq:speed_constraints}, 
\quad \eqref{eq:control_constraints}, \\
&\text{given:} \quad p_{i,l_k} (t_{i,l_k}^0) = p_{i,l_k}^0, \quad v_{i,l_k} (t_{i,l_k}^0) = v_{i,l_k}^0, \\
&\quad \quad \quad  \; p_{i,l_k} (t_{i,l_k}^{\text{cr}}) = p_{i,l_k}^{\text{cr}}.
\end{aligned}
\end{equation}
Problem \eqref{eq:energy_cost} can be solved analytically using Pontryagin's Minimum Principle, yielding closed-form expressions for the optimal trajectories as shown in \cite{Malikopoulos2020}. The final trajectories take the following form: 
\begin{align} 
    &u_i(t) = 6 \phi_{i,3}t + 2\phi_{i,2}, \nonumber \\
    &v_i(t) = 3\phi_{i,3}t^2 + 2\phi_{i,2}t + \phi_{i,1},\label{polynomials} \\
    &p_i(t) = \phi_{i,3}t^3 + \phi_{i,2}t^2 + \phi_{i,1}t + \phi_{i,0}, \nonumber
\end{align}
where $\phi_{i,3},\phi_{i,2},\phi_{i,1},\phi_{i,0} \in \mathbb{R}$ are constants of integration, that can be found by the initial and terminal conditions and the optimal terminal acceleration $u(t_i^f)=0$ (see \cite{chalaki2021CSM}). 

Note that the rear-end safety constraint \eqref{rear_end_constraint} is not explicitly enforced in \eqref{eq:energy_cost}. In mixed traffic, where CAVs coexist with HDVs, enforcing this constraint at the planning stage does not ensure its satisfaction during execution due to HDV uncertainty. Thus, safety is enforced reactively by tracking the trajectory from \eqref{eq:energy_cost} under the rear-end constraint \eqref{rear_end_constraint}, yielding:

\noindent \textbf{Reactive safety tracking problem:}
\begin{equation}
\begin{aligned}
\label{eq:tracking}
&\underset{u_{i,l_k}}{\min} \quad \frac{1}{2} \big(u_{i,l_k}(t) - u_{i,l_k}^{*}(t)\big)^2, \\
&\text{subject to:} \quad \eqref{eq:speed_constraints}, \quad \eqref{eq:control_constraints}, \\
& \quad \quad \quad \quad \;\; \dot{h}_{i,l_k}(t) + \kappa \, h_{i,l_k}(t) \geq 0,
\end{aligned}
\end{equation}
where $u_{i,l_k}^{*}(t)$ is the reference control input obtained from \eqref{eq:energy_cost}, and $h_{i,l_k}(t) = p_{j,l_k}(t) - p_{i,l_k}(t) - s_0 - v_{i,l_k}(t)T$ is a safety function encoding the rear-end constraint \eqref{rear_end_constraint}, with $\kappa > 0$ a tunable parameter that governs the rate at which the safety margin is enforced. The constraint $\dot{h}_{i,l_k}(t) + \kappa \, h_{i,l_k}(t) \geq 0$ ensures that whenever the system approaches the safety boundary, the control input is adjusted to maintain a safe following distance. Since the objective is quadratic in $u_{i,l_k}$ and the safety constraint is linear with respect to $u_{i,l_k}$ after substituting the vehicle dynamics \eqref{eq:vehicle_dynamics}, this formulation reduces to a quadratic programming (see \cite{xu2022general}).

\subsection{Case of replanning}

When the deviation from the reference trajectory becomes significant, e.g., due to an unexpected deceleration of a preceding HDV, the CAV may no longer be able to cross within its assigned green interval. To address this, we employ an event-triggered replanning mechanism. Specifically, if the tracking error exceeds a threshold $\epsilon$, the CAV re-estimates the preceding vehicle’s crossing time and re-checks the feasibility of its own crossing within the current green interval. If the originally assigned crossing time is no longer feasible, the CAV updates it by re-solving \eqref{crossing_time_selection} using the updated estimate $t_{\text{pr},l_k}^{\text{cr}}$. It then recomputes the reference trajectory by solving \eqref{eq:energy_cost} and updates the reactive tracking problem \eqref{eq:tracking}.
\section{Upper level problem: Routing for CAVs}
\label{sec:high_level_controller}

In the previous section, we discussed how to design traffic signal control and trajectory optimization for CAVs, minimizing travel delays when vehicles approach intersections. In this section, we shift our focus to the macroscopic level and investigate how to optimize the route choices of CAVs to alleviate network congestion while accounting for the uncontrollable route choices of HDVs.

\subsection{Problem Formulation}


Let $N_e(t')$ denote the accumulation of vehicles on edge $e \in \mathcal{E}$ at macroscopic time step $t'$, and $\tau_e(N_e(t'))$ denote the corresponding travel time.

The high-level objective is to minimize the Total Travel Time (TTT) over the planning horizon $H$. The objective is 
\begin{equation}
\label{obj: TTT}
    J = \sum_{t'=1}^{H} \sum_{e \in \mathcal{E}} N_e(t') \tau_e(N_e(t')).
\end{equation}
For a given origin-destination pair, let $\mathcal{P}_{o,d}$ be the set of $P$ candidate paths. In practice, we can pre-compute a set of candidate routes using the $k$-shortest path algorithm. For each CAV $i$ departing at time $t'$, we define a binary decision variable as:
\begin{equation}
z_{i,p} = 
\begin{cases} 
1 & \text{if vehicle } i \text{ is assigned to path } p \in \mathcal{P}_{o,d}, \\
0 & \text{otherwise}.
\end{cases}
\end{equation}
The routing assignment is constrained such that each vehicle selects only one path, ensuring that $\sum_{p \in \mathcal{P}_{o,d}} z_{i,p} = 1$.

\subsection{Marginal Cost as Discrete Gradient Descent}
Solving \eqref{obj: TTT} in a centralized way is computationally intractable for real-time large-scale networks. To this end, we propose a decentralized marginal cost-based routing policy that acts as a gradient descent algorithm on the global network delay.  For a controllable CAV $i$, the marginal cost (MC), i.e., the perturbation brought by adding a single vehicle onto edge $e$ at the time of arrival, is defined as:
\begin{equation}
    MC_e =  \tau_e(N_e) + N_e \frac{\partial \tau_e(N_e)}{\partial N_e},
\end{equation}
where $N_e(t') = N_{e, HDV}(t') + N_{e, CAV}(t')$.

\begin{theorem}
Given an exogenous, HDV flow distribution, assigning a CAV sequence to paths that minimizes the $MC_e$ strictly minimizes the TTT objective.
\end{theorem}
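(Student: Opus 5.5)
The argument is a discrete first-order (exchange) argument on the total travel time \eqref{obj: TTT}. We work at a fixed macroscopic step $t'$ and write $J_e(N_e)=N_e\,\tau_e(N_e)$, so that $J=\sum_{t'}\sum_e J_e(N_e(t'))$. The first step is to identify what one CAV contributes to $J$. Placing CAV $i$ on path $p\in\mathcal{P}_{o,d}$ raises $N_e$ by one on every edge $e\in p$ and leaves all other edges unchanged, because the HDV flow $N_{e,\mathrm{HDV}}$ is exogenous and does not react. The resulting change in the objective is therefore
\begin{equation*}
\Delta J_p=\sum_{e\in p}\bigl[J_e(N_e+1)-J_e(N_e)\bigr].
\end{equation*}
A Taylor expansion of $J_e$ to first order, with $J_e'(N_e)=\tau_e(N_e)+N_e\,\partial\tau_e/\partial N_e=MC_e$, gives $\Delta J_p\approx\sum_{e\in p}MC_e$. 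Choosing $z_{i,p}=1$ for the path of least summed marginal cost thus minimizes the increment of $J$ caused by vehicle $i$. This is the sense in which the policy is a discrete gradient step.

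The second step passes from a single vehicle to the whole CAV sequence. I would use convexity of each $J_e$, which holds for the usual nondecreasing convex link performance functions such as BPR: if $\tau_e$ is nondecreasing and convex, then $N\tau_e(N)$ is convex. We process CAVs one at a time. Convexity makes the increment $J_e(N_e+1)-J_e(N_e)$ nondecreasing in $N_e$, so the assignment problem restricted to one time step is a separable convex integer program over path flows. For such a problem, the greedy rule "send each unit to the path of least current marginal cost" produces an allocation that satisfies the discrete KKT, or Wardrop-type system-optimum, conditions. These conditions say that no CAV can be moved from its path $p$ to some $q$ with $\sum_{e\in q}MC_e<\sum_{e\in p}MC_e$. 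By convexity they are sufficient for a global minimum of $J$ among CAV assignments.

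Strictness needs $\tau_e$ strictly increasing, so that $J_e$ is strictly convex and the minimizing edge flows are unique. A separate argument is needed when several paths have equal marginal cost. In that case I would show that any tie-break leaves $J$ unchanged, or else phrase "strictly" as "strictly decreases $J$ relative to any non-MC choice at each step".

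The hard step is reconciling the sequential greedy assignment with the global optimum. With sums over shared edges, greedy is optimal only when the problem has a gross-substitutes or M-convex structure, and path-based flows need not have it. I would try first to rely on the fact that $J$ is convex in the aggregate path flows. I would then show that any non-optimal final state admits an improving single-vehicle swap, which is exactly the descent direction the MC rule would have chosen. This step may require extending "CAV sequence" to include reassignment, in the receding-horizon sense.

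The other weak point is the first-order approximation of $J_e(N+1)-J_e(N)$ by the derivative. I would handle it either by redefining $MC_e$ as the exact forward difference, or by using convexity to bound the error: $J_e'(N_e)\le J_e(N_e+1)-J_e(N_e)\le J_e'(N_e+1)$.
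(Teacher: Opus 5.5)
Your first paragraph is, in substance, the paper's entire proof. The paper differentiates $N_e\tau_e(N_e)$ with respect to the CAV flow on $e$ to get $MC_e$. It then approximates the effect of placing vehicle $i$ on $p$ by $\Delta J_p\approx\sum_{e\in p}MC_e$, and concludes that the minimizing path is the steepest-descent choice for $J$. It stops there. What it actually establishes is a per-vehicle, first-order descent property, not global minimality of the TTT.

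The genuine gap is your second step, and it cannot be repaired along the lines you sketch. The global claim fails in general, even with convex $J_e$ and with $MC_e$ replaced by the exact increment $J_e(N_e+1)-J_e(N_e)$.

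First, committing CAVs one at a time need not reach a swap-stable allocation, even for a single OD pair. Take the Braess network with paths $o\to u\to d$, $o\to v\to d$, $o\to u\to v\to d$. Let the increments be $1,10$ on $(o,u)$ and $(v,d)$, $3,3$ on $(u,d)$ and $(o,v)$, and $0$ on $(u,v)$. The first CAV takes the zig-zag path (increment $2$ versus $4$). The second then pays $13$, for a total of $15$ against the optimum $8$. An improving swap exists, but the MC rule never revisits a committed vehicle.

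Second, swap-stability, your ``discrete KKT'' condition, does not imply global optimality for integer multi-commodity assignments. Suppose CAV 1 can use $e_1$ or $e_2$, and a later CAV 2 from another OD pair can use $e_1$ or $e_3$. Let the increments be $1,10$ on $e_1$, $2$ on $e_2$, and $8$ on $e_3$. The MC rule gives $(e_1,e_3)$ with cost $9$, and no single-vehicle move improves it (the two moves give $10$ and $11$). Yet $(e_2,e_1)$ costs $3$. Convexity of $J$ in the aggregate path flows makes the KKT conditions sufficient only for the fractional relaxation, not for integer assignments.

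What the theorem can honestly claim is exactly your first step, which is also all the paper proves. Given the exogenous HDV loads and the already-committed CAVs, the MC rule picks the path with the smallest first-order increment of $J$. A genuinely global statement needs extra structure. One option is a single OD pair on parallel (more generally series-parallel) links, where greedy incremental allocation of a separable convex cost is optimal. Another is a fractional formulation with repeated reassignment to the system-optimal equilibrium.

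Note also that your bound $J_e'(N_e)\le J_e(N_e+1)-J_e(N_e)\le J_e'(N_e+1)$ does not make $\arg\min_p\sum_{e\in p}MC_e$ coincide with the minimizer of the exact path increments. So even the per-vehicle claim holds only to first order, which is precisely the ``$\approx$'' in the paper's argument.
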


\begin{proof}
The gradient of the objective $J$ with respect to the controllable flow on edge $e$ at time $t'$ is:
$$\frac{\partial J}{\partial \hat{N}_{e, CAV}(t')} = \frac{\partial}{\partial N_e(t')} \Big[ N_e(t') \tau_e(N_e(t')) \Big] \cdot \frac{\partial N_e(t')}{\partial \hat{N}_{e, CAV}(t')}.$$
The term $\frac{\partial N_e(t')}{\partial \hat{N}_{e, CAV}(t')} = 1$. Applying the product rule to the remaining term yields
$$\frac{\partial J}{\partial \hat{N}_{e, CAV}(t')} = \tau_e(N_e(t')) + N_e(t') \frac{\partial \tau_e(N_e(t'))}{\partial N_e(t')}.$$

Consequently, the perturbation to the global objective $\Delta J_p$ caused by assigning vehicle $i$ to path $p$ is approximated by the sum of the directional derivatives along that path:
$$\Delta J_p \approx \sum_{e \in p} \left( \tau_e(N_e) + N_e \frac{\partial \tau_e}{\partial N_e} \right) = \sum_{e \in p} MC_e(t'),$$
which represents the total marginal impact of path $p$ on the network. By constructing the decision vector $z_{i,p}$ such that $z_{i,p^*} = 1$ for $p^* = \arg\min_p \sum_{e \in p} MC_e(t')$, it ensures that the controller selects the routing assignment with the steepest descent direction with respect to the global objective $J$.
\end{proof}

\subsection{Measurement-Based Implementation}
Due to the fact that the travel time function $\tau_e(\cdot)$ is governed by stochastic microscopic car-following dynamics and signal control, an explicit analytical derivative of it is unavailable. Instead, to compute the $MC_e$ in practice without relying on physics models, the controller leverages empirical data from the low-level infrastructure. At each macroscopic time step $t'$, the controller receives the measured travel time $\tilde{\tau}_e(t')$ and accumulation $\tilde{N}_e(t')$ from the microscopic network. We estimate congestion sensitivity using a moving-window finite difference of simulator feedback given by
$$\eta_e(t') = \frac{\partial \tau_e}{\partial N_e} \approx \frac{\tilde{\tau}_e(t') - \tilde{\tau}_e(t' - \Delta t')}{\tilde{N}_e(t') - \tilde{N}_e(t' - \Delta t')}.$$
To ensure numerical stability, $\eta_e(t')$ is subjected to a low-pass filter to smooth high-frequency microscopic noise.

Future network states are estimated using spatiotemporal memory matrices $\mathcal{M}^{HDV} \in \mathbb{R}^{|\mathcal{E}| \times H}$ and $\mathcal{M}^{CAV} \in \mathbb{R}^{|\mathcal{E}| \times H}$. As vehicles are routed, their anticipated physical presence on future edges is reserved in these matrices based on accumulated travel times. Therefore, the predicted future volume on edge $e$ at future time step $m$ is $N_e(m) = \mathcal{M}^{HDV}_{e,m} + \mathcal{M}^{CAV}_{e,m}$.

When evaluating a path $p$ for a CAV, the controller traverses the path edge-by-edge, accumulating the projected time of arrival $m$ for each subsequent edge. Using the predicted future volume $N_e(m)$ and the empirically derived congestion sensitivity $\eta_e$, we project the future travel time via a first-order Taylor expansion
$$\hat{\tau}_e(m) = \max \Big( \tau^{free}_e, \, \tilde{\tau}_e(t') + \eta_e \big( N_e(m) - \tilde{N}_e(t') \big) \Big),$$
where $\tau^{free}_e$ is the free-flow travel time upper bound. 

Finally, the predicted MC for the path, $\hat{MC_p}$, is computed by substituting these empirical predictions into our gradient descent formulation
$$\hat{MC_p} = \sum_{e \in p} \Big( \hat{\tau}_e(m) + N_e(m) \eta_e \Big).$$
The CAV is then assigned to the path that minimizes $\hat{MC_p}$, and its future trajectory is registered into the memory matrix $\mathcal{M}^{CAV}$ to inform subsequent routing decisions.
\begin{figure}[t]
    \centering
    \includegraphics[width=0.9\linewidth]{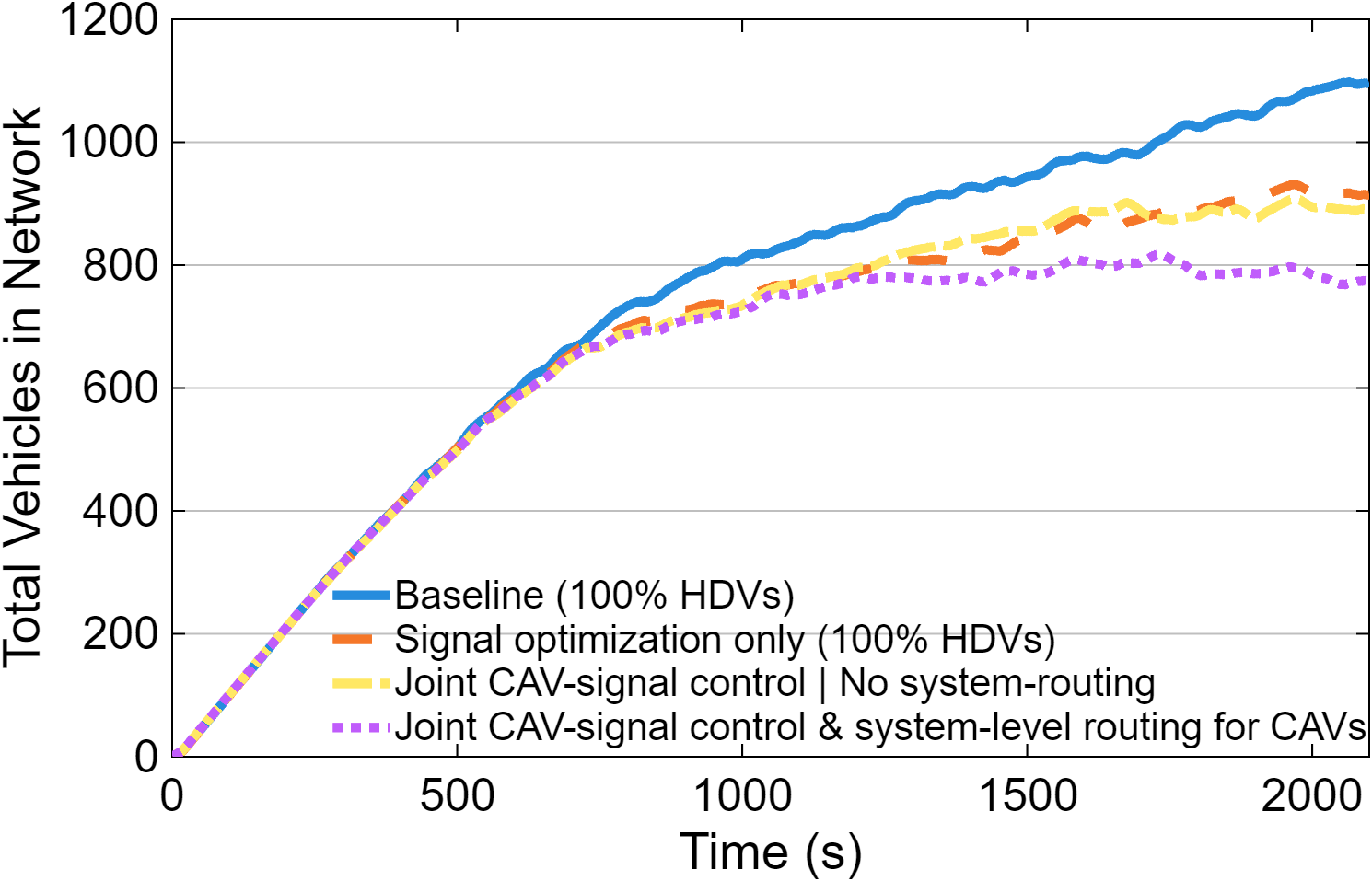}
    \caption{Evolution of total vehicles in the network}
    \label{fig:Total vehicles in the network as a function of time}
\end{figure}

\begin{figure*}[t]
    \centering
    \begin{subfigure}[t]{0.46\textwidth}
        \centering
        \includegraphics[width=\linewidth]{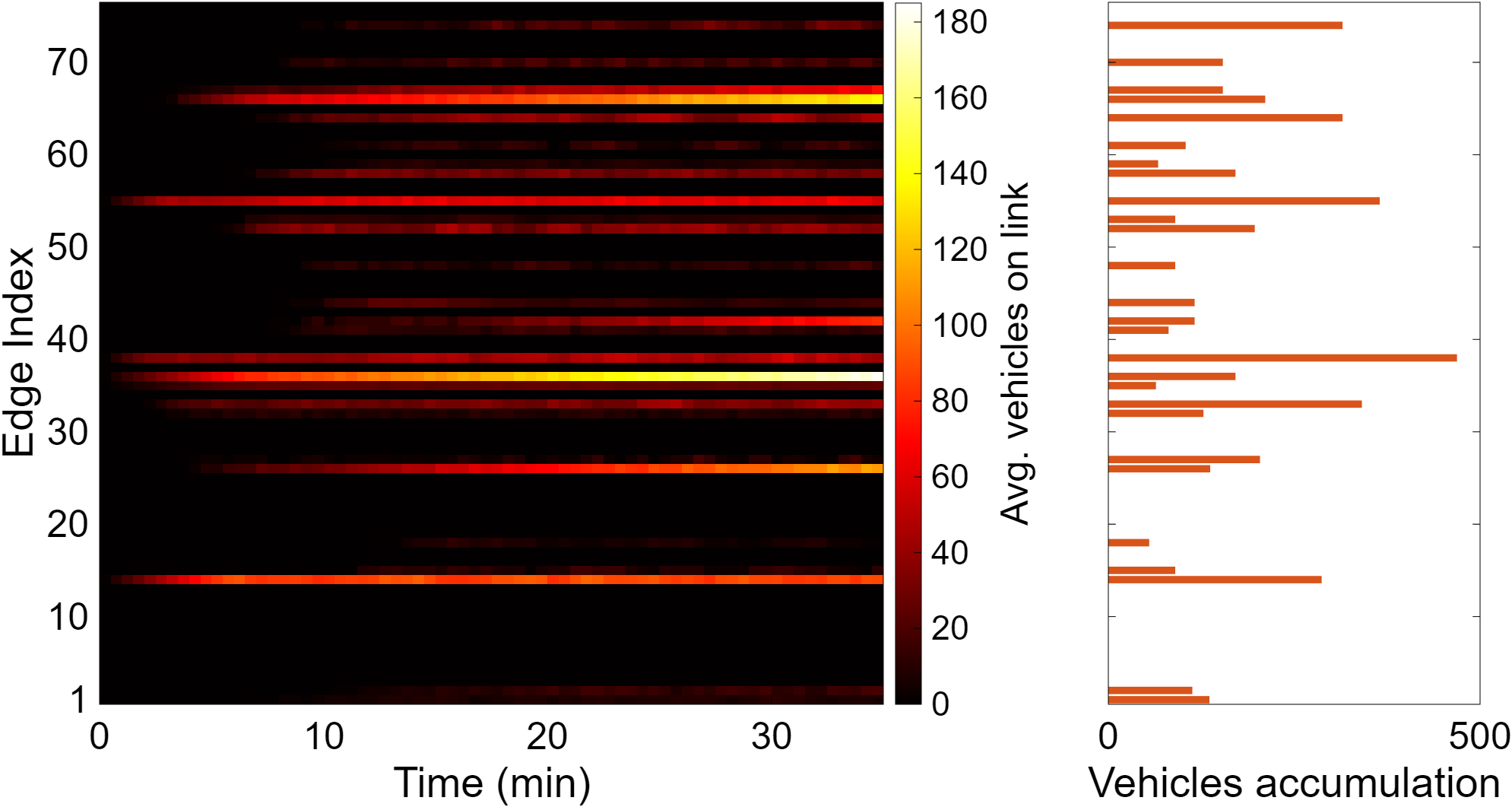}
        \caption{Baseline (100\% HDVs)}
        \label{flow_100HDVs_nothing_optimized}
    \end{subfigure}
    \hfill
    \begin{subfigure}[t]{0.46\textwidth}
        \centering
        \includegraphics[width=\linewidth]{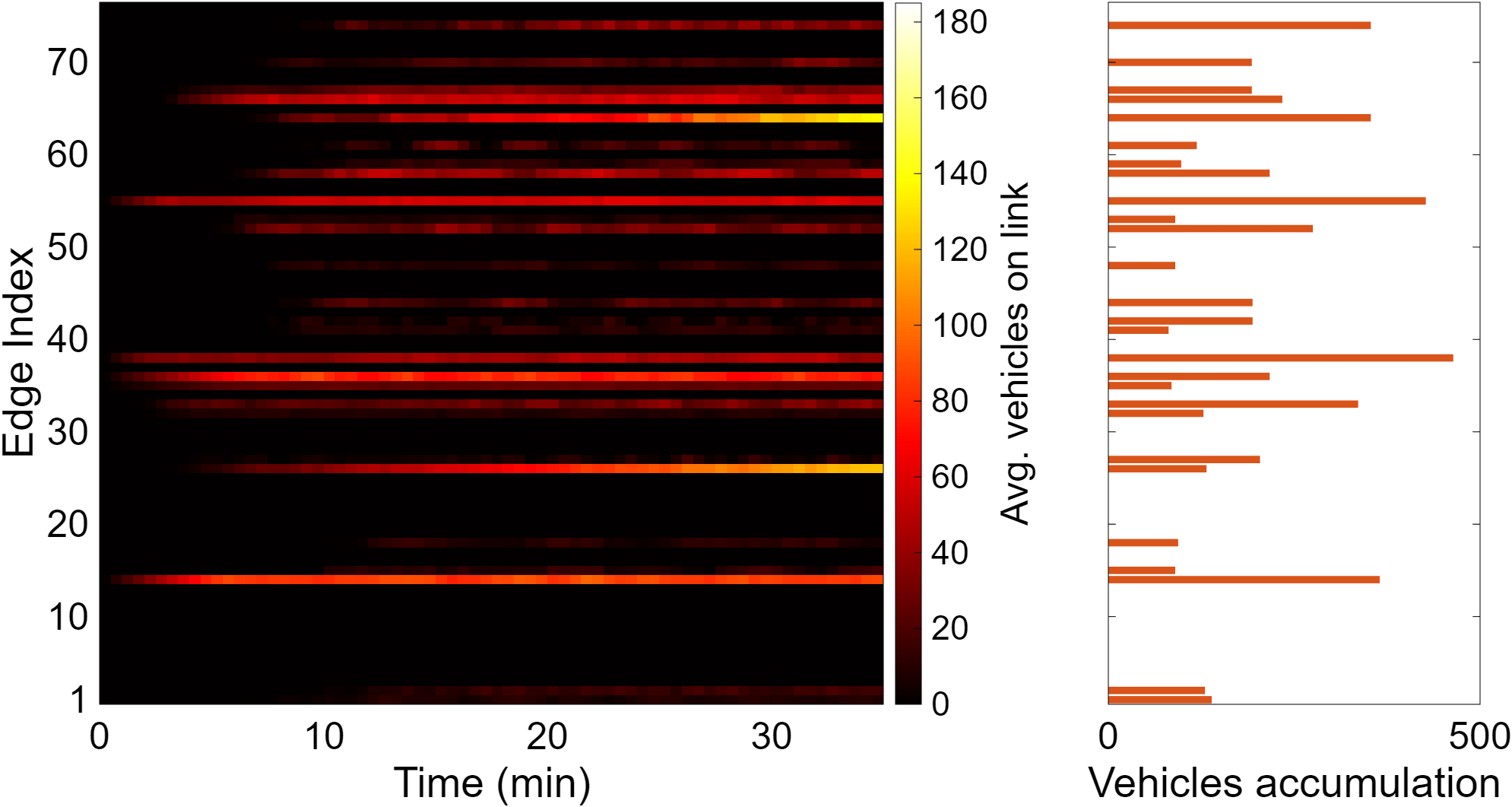}
        \caption{Signal optimization only (100\% HDVs)}
                \label{flow_100HDVs_signals_optimized}
    \end{subfigure}

    \vspace{0.5em} 

    \begin{subfigure}[t]{0.46\textwidth}
        \centering
        \includegraphics[width=\linewidth]{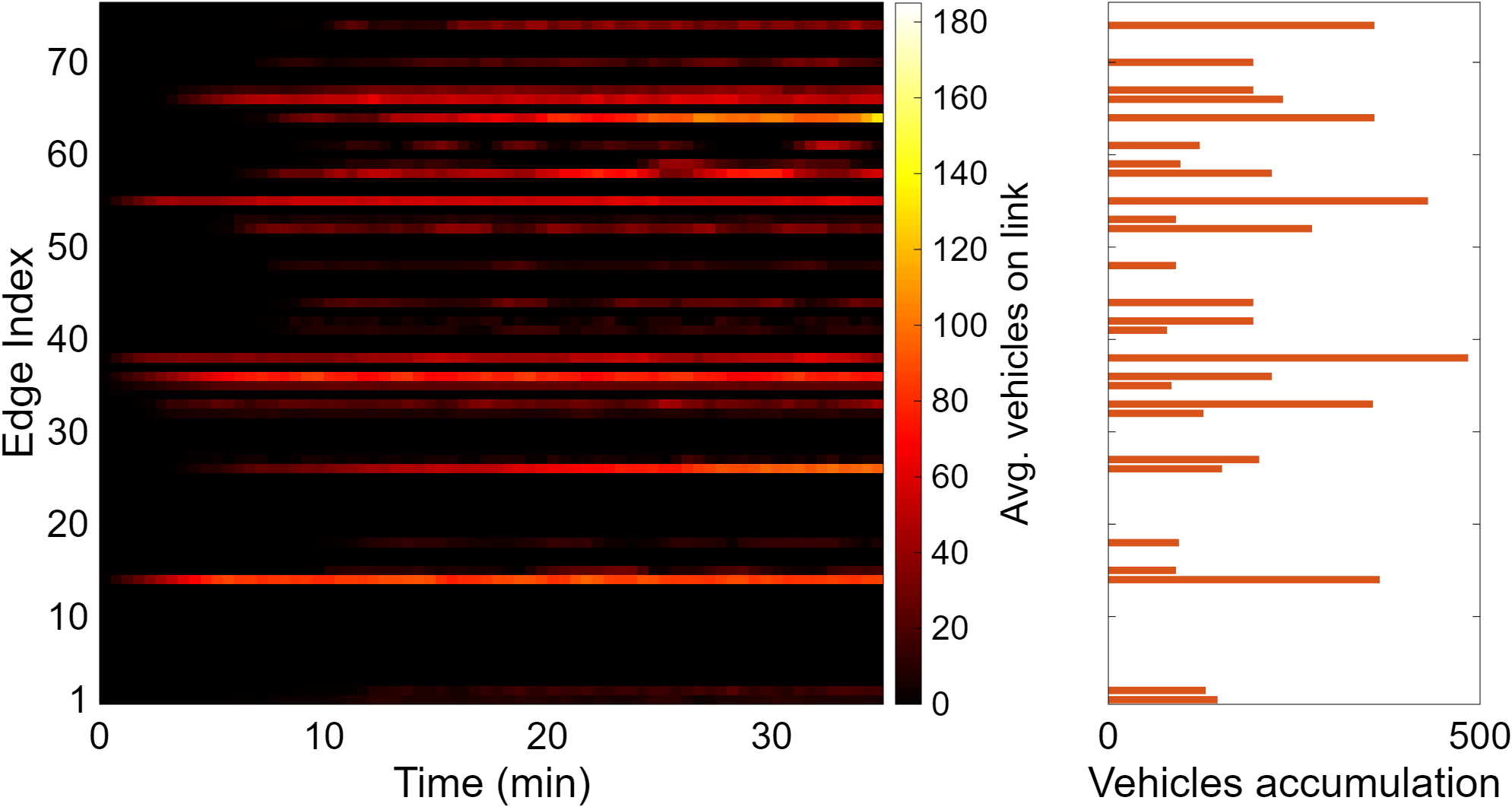}
        \caption{Joint CAV-signal control $|$ No system-level-routing}
                        \label{flow_50HDVs_signals_optimized_and_CAV_optimized_no_routing}
    \end{subfigure}
    \hfill
    \begin{subfigure}[t]{0.46\textwidth}
        \centering
        \includegraphics[width=\linewidth]{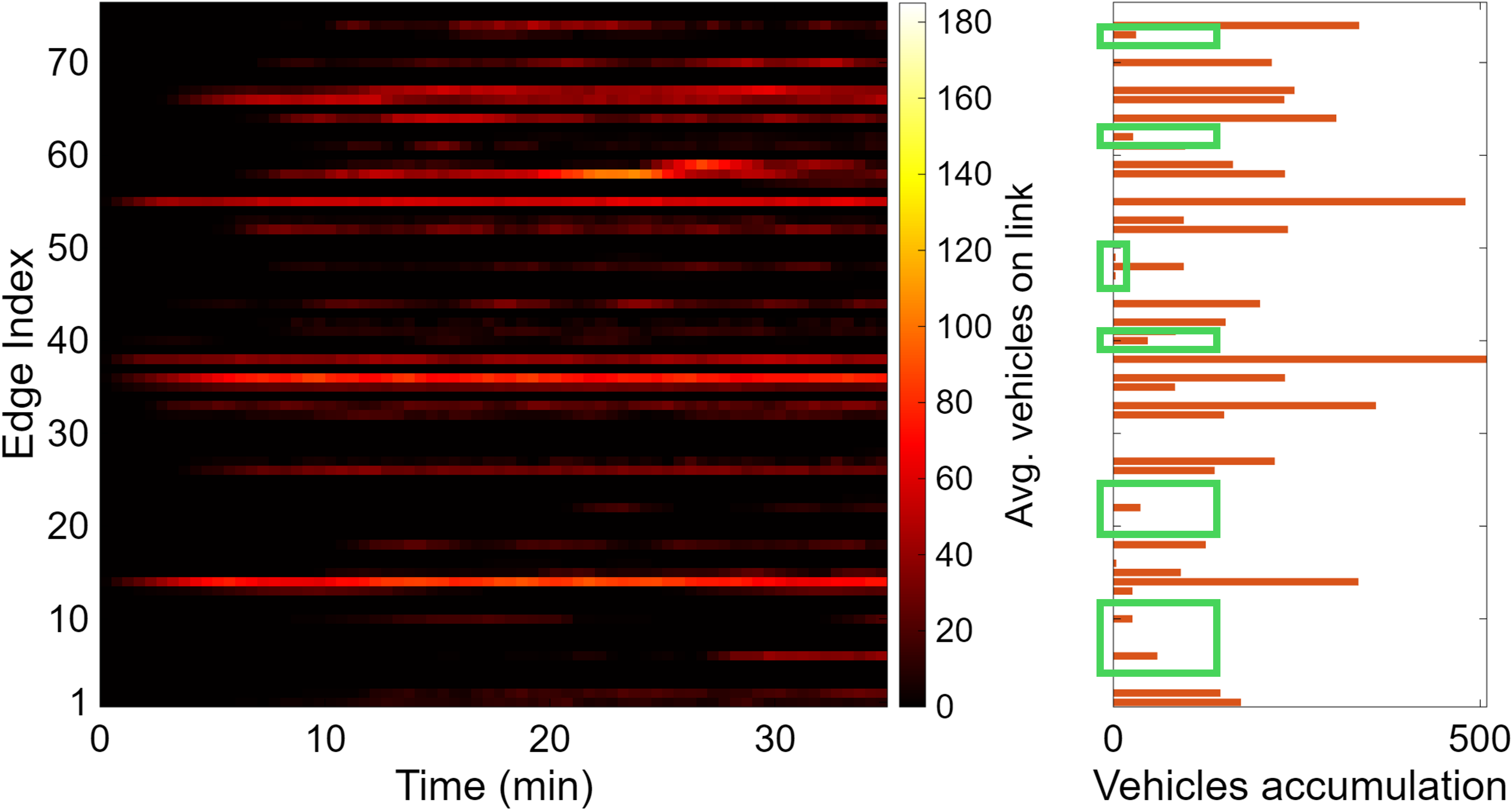}
\caption{Joint CAV-signal control \& system-level routing for CAVs}
\label{flow_Everything Optimized}
    \end{subfigure}

    \caption{Comparison of network performance under different control configurations.}
    \label{fig:flow_four_subfigures}
\end{figure*}

\begin{table}[t]
\centering
\renewcommand{\arraystretch}{1.15}
\setlength{\tabcolsep}{5pt}
\footnotesize
\begin{tabular}{@{}cl cccc c@{}}
\hline\hline
\textbf{Method} & \textbf{O} & \multicolumn{4}{c}{\textbf{Destination}} & \multirow{2}{*}{\textbf{Avg}} \\
\cline{3-6}
 & & \textbf{10} & \textbf{11} & \textbf{15} & \textbf{16} & \\
\hline
\multirow{4}{*}{\rotatebox{90}{\scriptsize\shortstack{Fixed sig.\\100\% HDVs}}}
 & 1  & 18.24 & 15.52 & 20.26 & 17.66 & 17.92 \\
 & 2  & 13.46 & 14.75 & 15.56 & 11.11 & 13.72 \\
 & 13 & 16.33 & 12.43 & 14.09 & 16.00 & 14.71 \\
 & 20 & 12.25 & 14.13 &  4.58 &  4.79 &  8.94 \\
\hline
\multirow{4}{*}{\rotatebox{90}{\scriptsize\shortstack{Opt.\ sig.\\100\% HDVs}}}
 & 1  & 17.39 & 12.30 & 19.80 & 15.26 & 16.19 \\
 & 2  & 11.82 & 13.45 & 14.09 &  9.06 & 12.11 \\
 & 13 & 12.67 & 11.19 & 10.53 & 15.21 & 12.40 \\
 & 20 & 11.98 & 14.14 &  4.17 &  4.77 &  8.77 \\
\hline
\multirow{4}{*}{\rotatebox{90}{\scriptsize\shortstack{Opt.\ sig.\\50\% CAVs\\no routing}}}
 & 1  & 17.53 & 14.01 & 19.62 & 15.15 & 16.58 \\
 & 2  & 12.03 & 13.56 & 14.25 &  8.99 & 12.21 \\
 & 13 & 12.54 & 12.20 & 10.47 & 15.36 & 12.64 \\
 & 20 & 11.01 & 13.28 &  4.13 &  4.81 &  8.31 \\
\hline
\multirow{4}{*}{\rotatebox{90}{\scriptsize\shortstack{Opt.\ sig.\\50\% CAVs\\CAV routed}}}
 & 1  & 15.91 & 13.99 & 19.56 & 15.41 & 16.22 \\
 & 2  & 10.75 & 11.83 & 13.06 &  9.19 & 11.21 \\
 & 13 & 12.97 & 11.54 & 10.43 & 15.01 & 12.49 \\
 & 20 &  8.25 & 10.10 &  4.18 &  4.64 &  6.79 \\
\hline\hline
\end{tabular}
\caption{Average travel time (min) per OD pair}\label{tab:od_travel_times}
\end{table}

\section{Simulation results}
To validate our proposed approach, we conducted simulations on the Sioux Falls network based on SUMO (Fig.~\ref{fig:network}) over a total simulation horizon of 35 minutes (2100 s). The desired and maximum speed on each link was set to $13.89 \text{ m/s}$ ($\approx 50 \text{ km/h}$), with acceleration bounded by $u_{\min} = -6 \text{ m/s}^2$ and $u_{\max} = 5 \text{ m/s}^2$. The length of the intersection control zone was set to 200 m. Travel demand originates at nodes 1, 2, 13, and 20, with destinations at nodes 10, 11, 15, and 16. The demand was set to 250 veh/h for each origin-destination (OD) pair. We select $K=7$ paths per OD pair for upper-level routing in k-shortest path algorithm. The upper-level travel-time measurements are updated using a 5-minute rolling window. That is, when evaluating the travel time in each link, we averaged travel time associated with the last 5 minutes. We considered four control configurations:
(i) \textit{Baseline}: 100\% HDVs with fixed signal timings and shortest-path routing; 
(ii) \textit{Signal Optimization Only}: 100\% HDVs with optimized signals (Section~\ref{sec:signal_opt_problem}) and shortest-path routing; 
(iii) \textit{Signal + CAV Control}: 50\% CAV penetration rate utilizing the lower-level joint CAV–signal optimization (Section~\ref{section_joint_cav_signals}) without upper-layer routing; and 
(iv) \textit{Proposed}: The proposed hierarchical framework, 50\% CAV penetration rate with both lower-level joint optimization and upper-level routing for CAVs.

Fig.~\ref{fig:Total vehicles in the network as a function of time} illustrates the evolution of total vehicles within the network across the four scenarios. The baseline case consistently shows the highest vehicle accumulation. Implementing the lower-level joint signal–trajectory control (50\% CAV penetration) yields a further, albeit moderate, reduction in active vehicles compared to signal optimization alone. Finally, the proposed hierarchical optimization achieves the lowest vehicle accumulation, reducing it by approximately 300 vehicles compared to the baseline case of 100\% HDVs without traffic signal optimization, thereby demonstrating improved discharge efficiency.

Fig. \ref{fig:flow_four_subfigures} illustrates the spatiotemporal traffic distribution across the network for the same scenarios. For each scenario, the heatmap depicts the average number of vehicles on each edge, while the histogram aggregates the total traffic volume accumulated per edge over the entire simulation horizon. Comparing the baseline (Fig. \ref{flow_100HDVs_nothing_optimized}) to the signal optimization-only method (Fig. \ref{flow_100HDVs_signals_optimized}), the latter noticeably reduces the intensities of traffic. Further congestion alleviation is observed in Fig.~\ref{flow_50HDVs_signals_optimized_and_CAV_optimized_no_routing} when low-level CAV trajectory optimization is activated. Finally, Fig.~\ref{flow_Everything Optimized} demonstrates the distinct advantage of the full hierarchical framework proposed in this work. By dynamically coupling the layers, the upper-level controller proactively redirects CAV flows to alternative paths (marked by the green boxes). This relieves the most heavily congested links and achieves a significantly more balanced network-wide traffic distribution. Also, this balanced distribution reveals a significant reduction in travel time; as summarized in Table~\ref{tab:od_travel_times}, the full hierarchical framework achieves the lowest average travel times across all origin-destination (OD) pairs compared to employing either control layer in isolation.


\section{Concluding Remarks}

In this paper, we proposed a hierarchical control framework for mixed traffic networks that integrates system-level routing with the joint optimization of traffic signals and CAV trajectories at signalized intersections. The proposed architecture couples a macroscopic routing layer, which assigns paths to CAVs using aggregated link-level traffic information, with a microscopic lower layer, which optimizes signal phase durations and computes efficient CAV trajectories in the vicinity of each intersection. In this way, the framework connects network-level route guidance with local intersection control while accounting for the presence of both CAVs and HDVs. The proposed approach was evaluated in the Sioux Falls network using SUMO, showing superior performance than applying either control layer in isolation. Future work should account for communication among adjacent intersections, together with dynamic rerouting of CAVs. Moreover, incorporating behavioral models that capture how HDVs respond to routing recommendations could provide further insight into their effect on overall system performance. Another promising direction is to extend the framework to multimodal systems (e.g., public transit and carpooling) and assess mobility accessibility impacts.

\linespread{0.99}\selectfont
\bibliographystyle{ieeetr}
 \bibliography{bibliography/bibliography,bibliography/Filippos,bibliography/IDS_Publications_03272026}

@ARTICLE{tzortzoglou2026toward,
  author={Tzortzoglou, Filippos N. and Malikopoulos, Andreas A.},
  journal={IEEE Potentials}, 
  title={Teaching cars to drive}, 
  year={2025},
  volume={44},
  number={6},
  pages={15-24},
  doi={10.1109/MPOT.2026.3664382}}

@INPROCEEDINGS{Salazar2019congestion,
  author={Salazar, Mauro and Tsao, Matthew and Aguiar, Izabel and Schiffer, Maximilian and Pavone, Marco},
  booktitle={2019 18th European Control Conference (ECC)}, 
  title={A Congestion-aware Routing Scheme for Autonomous Mobility-on-Demand Systems}, 
  year={2019},
  volume={},
  number={},
  pages={3040-3046},
  doi={10.23919/ECC.2019.8795897}}

@article{alessandrini2015automated,
  title={Automated vehicles and the rethinking of mobility and cities},
  author={Alessandrini, Adriano and Campagna, Andrea and Delle Site, Paolo and Filippi, Francesco and Persia, Luca},
  journal={Transportation Research Procedia},
  volume={5},
  pages={145--160},
  year={2015},
  publisher={Elsevier}
}

@article{wollenstein2021routing,
  title={Routing and rebalancing intermodal autonomous mobility-on-demand systems in mixed traffic},
  author={Wollenstein-Betech, Salom{\'o}n and Salazar, Mauro and Houshmand, Arian and Pavone, Marco and Paschalidis, Ioannis Ch and Cassandras, Christos G},
  journal={IEEE Transactions on Intelligent Transportation Systems},
  volume={23},
  number={8},
  pages={12263--12275},
  year={2021},
  publisher={IEEE}
}

@article{tajalli2021traffic,
  title={Traffic signal timing and trajectory optimization in a mixed autonomy traffic stream},
  author={Tajalli, Mehrdad and Hajbabaie, Ali},
  journal={IEEE Transactions on Intelligent Transportation Systems},
  volume={23},
  pages={6525--6538},
  year={2021},
  publisher={IEEE}
}

@article{guo2023cotv,
  title={CoTV: Cooperative control for traffic light signals and connected autonomous vehicles using deep reinforcement learning},
  author={Guo, Jiaying and Cheng, Long and Wang, Shen},
  journal={IEEE Transactions on Intelligent Transportation Systems},
  volume={24},
  number={10},
  pages={10501--10512},
  year={2023},
  publisher={IEEE}
}

@article{kamal2019development,
  title={Development and evaluation of an adaptive traffic signal control scheme under a mixed-automated traffic scenario},
  author={Kamal, Md Abdus Samad and Hayakawa, Tomohisa and Imura, Jun-ichi},
  journal={IEEE Transactions on Intelligent Transportation Systems},
  volume={21},
  number={2},
  pages={590--602},
  year={2019},
  publisher={IEEE}
}

@article{rossi2018routing,
  title={Routing autonomous vehicles in congested transportation networks: structural properties and coordination algorithms: F. Rossi et al.},
  author={Rossi, Federico and Zhang, Rick and Hindy, Yousef and Pavone, Marco},
  journal={Autonomous Robots},
  volume={42},
  number={7},
  pages={1427--1442},
  year={2018},
  publisher={Springer}
}

@article{tan2026real,
  title={Real-time cooperative scheduling for CAVs at signal-free intersections via dynamic graph analysis and tree search},
  author={Tan, Xiaojun and Ding, Yuhao and Wang, Shuai},
  journal={Robotics and Autonomous Systems},
  pages={105427},
  year={2026},
  publisher={Elsevier}
}

@article{salazar2024accessibility,
  title={On accessibility fairness in intermodal autonomous mobility-on-demand systems},
  author={Salazar, Mauro and Giraldo, Sara Betancur and Paparella, Fabio and Pedroso, Leonardo},
  journal={IFAC-PapersOnLine},
  volume={58},
  number={10},
  pages={327--333},
  year={2024},
  publisher={Elsevier}
}

@article{naderi2025lane,
  title={Lane-free signal-free intersection crossing via model predictive control},
  author={Naderi, Mehdi and Typaldos, Panagiotis and Papageorgiou, Markos},
  journal={Control Engineering Practice},
  volume={154},
  pages={106115},
  year={2025},
  publisher={Elsevier}
}

@article{liu2025integrated,
  title={An Integrated Optimization Framework for Connected and Automated Vehicles and Traffic Signals in Urban Networks},
  author={Liu, Meiqi and Li, Yalan and Liu, Xiaofei and Chen, Yang and Hao, Ruochen},
  journal={Systems},
  volume={13},
  number={4},
  pages={224},
  year={2025},
  publisher={MDPI}
}

@article{niroumand2025real,
  title={Real-time network-level traffic signal and trajectory optimization with connected automated and human-driven vehicles},
  author={Niroumand, Ramin and Kafashan, Fahim and Hajibabai, Leila and Hajbabaie, Ali},
  journal={Computer-Aided Civil and Infrastructure Engineering},
  volume={40},
  number={30},
  pages={5891--5907},
  year={2025},
  publisher={Wiley Online Library}
}

@inproceedings{chu2017dynamic,
  title={Dynamic lane reversal routing and scheduling for connected autonomous vehicles},
  author={Chu, Kai Fung and Lam, Albert YS and Li, Victor OK},
  booktitle={2017 International Smart Cities Conference (ISC2)},
  pages={1--6},
  year={2017},
  organization={IEEE}
}

@article{liang2020mobility,
  title={Mobility-aware charging scheduling for shared on-demand electric vehicle fleet using deep reinforcement learning},
  author={Liang, Yanchang and Ding, Zhaohao and Ding, Tao and Lee, Wei-Jen},
  journal={IEEE Transactions on Smart Grid},
  volume={12},
  number={2},
  pages={1380--1393},
  year={2020},
  publisher={IEEE}
}

@inproceedings{dresner2007sharing,
  title={Sharing the road: Autonomous vehicles meet human drivers},
  author={DRESNER, K},
  booktitle={Proc. of 20th International Joint Conference on Artificial Intelligence (IJCAI), 2007},
  pages={1263--1268},
  year={2007}
}

@inproceedings{tzortzoglou2025safe,
	author = {Tzortzoglou, Filippos N and Beaver, Logan E and Malikopoulos, Andreas A},
	booktitle = {64th IEEE Conference on Decision and Control (CDC)},
	pages = {2146-2151},
	title = {Safe and Efficient Coexistence of Autonomous Vehicles with Human-Driven Traffic at Signalized Intersections},
	year = {2025}}

@article{tzortzoglou2024feasibility,
	author = {Tzortzoglou, Filippos N and Beaver, Logan E and Malikopoulos, Andreas A},
	journal = {IEEE Control Systems Letters},
	pages = {2057-2062},
	title = {A feasibility analysis at signal-free intersections},
	volume = {8},
	year = {2024}}

@article{bang2024emergingequity,
	author = {Bang, Heeseung and Dave, Aditya and Tzortzoglou, Filippos and Wang, Shanting and Malikopoulos, Andreas A},
	issue = {7},
	journal = {IEEE Transactions on Intelligent Transportation Systems},
	pages = {10623-10637},
	title = {On Mobility Equity and the Promise of Emerging Transportation Systems},
	volume = {26},
	year = {2024}}

@article{bang2023optimal,
	author = {Bang, Heeseung and Malikopoulos, Andreas A},
	journal = {Automatica},
	pages = {112389},
	title = {Optimal trajectory planning meets network-level routing: Integrated control framework for emerging mobility systems},
	volume = {179},
	year = {2025}}

@article{Bang2022combined,
	author = {Bang, Heeseung and Chalaki, Behdad and Malikopoulos, Andreas A},
	doi = {10.1109/LCSYS.2022.3176594},
	journal = {IEEE Control Systems Letters},
	pages = {2749-2754},
	title = {{Combined Optimal Routing and Coordination of Connected and Automated Vehicles}},
	volume = {6},
	year = {2022}}

@article{chalaki2021CSM,
	author = {Chalaki, Behdad and Beaver, Logan E. and Mahbub, A M Ishtiaque and Bang, Heeseung and Malikopoulos, Andreas A.},
	journal = {IEEE Control Systems Magazine},
	number = {6},
	pages = {20--34},
	title = {A Research and Educational Robotic Testbed for Real-time Control of Emerging Mobility Systems: From Theory to Scaled Experiments},
	volume = {42},
	year = {2022}}

@article{Malikopoulos2020,
	author = {Malikopoulos, Andreas A and Beaver, Logan E and Chremos, Ioannis Vasileios},
	journal = {Automatica},
	number = {109469},
	title = {Optimal Time Trajectory and Coordination for Connected and Automated Vehicles},
	volume = {125},
	year = {2021}}

@inproceedings{suriyarachchi2023optimization,
  title={Optimization-based coordination and control of traffic lights and mixed traffic in multi-intersection environments},
  author={Suriyarachchi, Nilesh and Quirynen, Rien and Baras, John S and Di Cairano, Stefano},
  booktitle={2023 American Control Conference (ACC)},
  pages={3162--3168},
  year={2023},
  organization={IEEE}
}

@article{guanetti2018control,
  title={Control of connected and automated vehicles: State of the art and future challenges},
  author={Guanetti, Jacopo and Kim, Yeojun and Borrelli, Francesco},
  journal={Annual reviews in control},
  volume={45},
  pages={18--40},
  year={2018},
  publisher={Elsevier}
}

@article{xu2022general,
  title={A general framework for decentralized safe optimal control of connected and automated vehicles in multi-lane signal-free intersections},
  author={Xu, Huile and Xiao, Wei and Cassandras, Christos G and Zhang, Yi and Li, Li},
  journal={IEEE Transactions on Intelligent Transportation Systems},
  volume={23},
  number={10},
  pages={17382--17396},
  year={2022},
  publisher={IEEE}
}

@article{le2024distributed,
  title={Distributed Optimization for Traffic Light Control and Connected Automated Vehicle Coordination in Mixed-Traffic Intersections},
  author={Le, Viet-Anh and Malikopoulos, Andreas A},
  journal={IEEE Control Systems Letters},
  year={2024},
  publisher={IEEE}
}

@article{Lakshmikantham1981LargescaleDS,
  title={Large-scale dynamic systems: Stability and structure [Book reviews]},
  author={V. Lakshmikantham},
  journal={IEEE Transactions on Automatic Control},
  year={1981},
  volume={26},
  pages={976-977},
  url={https://api.semanticscholar.org/CorpusID:10253654}
}

@article{YILDIRIMOGLU2018hierarchical,
title = {Hierarchical control of heterogeneous large-scale urban road networks via path assignment and regional route guidance},
journal = {Transportation Research Part B: Methodological},
volume = {118},
pages = {106-123},
year = {2018},
issn = {0191-2615},
doi = {https://doi.org/10.1016/j.trb.2018.10.007},
url = {https://www.sciencedirect.com/science/article/pii/S0191261518301152},
author = {Mehmet Yildirimoglu and Isik Ilber Sirmatel and Nikolas Geroliminis}
}

@misc{TransportationNetworksRepo,
  author = {{Transportation Networks for Research Core Team}},
  title = {Transportation Networks for Research},
  howpublished = {\url{https://github.com/bstabler/TransportationNetworks}},
  note = {Accessed: 2024}
}

@ARTICLE{Zhu2024Coverage,
  author={Zhu, Pengbo and Sirmatel, Isik Ilber and Ferrari-Trecate, Giancarlo and Geroliminis, Nikolas},
  journal={IEEE Transactions on Control Systems Technology}, 
  title={A Coverage Control-Based Idle Vehicle Rebalancing Approach for Autonomous Mobility-on-Demand Systems}, 
  year={2024},
  volume={32},
  number={5},
  pages={1839-1853},
  doi={10.1109/TCST.2024.3375765}}

\end{document}